\DeclareMathAlphabet{\pazocal}{OMS}{zplm}{m}{n}
\newtheorem{theorem}{Theorem}
\pgfplotsset{compat=1.10} 
\crefname{section}{Sect.}{Sect.}
\crefname{figure}{Figure}{Fig.}
\crefname{table}{Tab.}{Tab.}
\crefname{equation}{Eq.}{Eq.}
\def\BibTeX{{\rm B\kern-.05em{\sc i\kern-.025em b}\kern-.08em
    T\kern-.1667em\lower.7ex\hbox{E}\kern-.125emX}}
\newcommand{\Vset}{\ensuremath{{\pazocal V}}\xspace}
\newcommand{\Eset}{\ensuremath{{\pazocal E}}\xspace}
\newcommand{\Ggraph}{\ensuremath{{\pazocal G}}\xspace}
\newcommand{\lambdagnb}{\ensuremath{\lambda_{\text{\gls{gnb}}}}\xspace}
\newcommand{\lambdaue}{\ensuremath{\lambda_{\text{\gls{ue}}}}\xspace}
\newacronym{3gpp}{3GPP}{3rd Generation Partnership Project}
\newacronym{4g}{4G}{4th generation}
\newacronym{5g}{5G}{5th generation}
\newacronym{6g}{6G}{6th generation}
\newacronym{5gc}{5GC}{5G Core}
\newacronym{adc}{ADC}{Analog to Digital Converter}
\newacronym{aerpaw}{AERPAW}{Aerial Experimentation and Research Platform for Advanced Wireless}
\newacronym{ai}{AI}{Artificial Intelligence}
\newacronym{aimd}{AIMD}{Additive Increase Multiplicative Decrease}
\newacronym{am}{AM}{Acknowledged Mode}
\newacronym{amc}{AMC}{Adaptive Modulation and Coding}
\newacronym{amf}{AMF}{Access and Mobility Management Function}
\newacronym{aops}{AOPS}{Adaptive Order Prediction Scheduling}
\newacronym{api}{API}{Application Programming Interface}
\newacronym{apn}{APN}{Access Point Name}
\newacronym{ap}{AP}{Application Protocol}
\newacronym{aqm}{AQM}{Active Queue Management}
\newacronym{ausf}{AUSF}{Authentication Server Function}
\newacronym{avc}{AVC}{Advanced Video Coding}
\newacronym{awgn}{AGWN}{Additive White Gaussian Noise}
\newacronym{balia}{BALIA}{Balanced Link Adaptation Algorithm}
\newacronym{bbu}{BBU}{Base Band Unit}
\newacronym{bdp}{BDP}{Bandwidth-Delay Product}
\newacronym{ber}{BER}{Bit Error Rate}
\newacronym{bf}{BF}{Beamforming}
\newacronym{bler}{BLER}{Block Error Rate}
\newacronym{brr}{BRR}{Bayesian Ridge Regressor}
\newacronym{bs}{BS}{Base Station}
\newacronym{bsr}{BSR}{Buffer Status Report}
\newacronym{bss}{BSS}{Business Support System}
\newacronym{ca}{CA}{Carrier Aggregation}
\newacronym{caas}{CaaS}{Connectivity-as-a-Service}
\newacronym{cb}{CB}{Code Block}
\newacronym{cc}{CC}{Congestion Control}
\newacronym{ccid}{CCID}{Congestion Control ID}
\newacronym{cco}{CC}{Carrier Component}
\newacronym{cd}{CD}{Continuous Delivery}
\newacronym{cdd}{CDD}{Cyclic Delay Diversity}
\newacronym{cdf}{CDF}{Cumulative Distribution Function}
\newacronym{cdn}{CDN}{Content Distribution Network}
\newacronym{cli}{CLI}{Command-line Interface}
\newacronym{cn}{CN}{Core Network}
\newacronym{codel}{CoDel}{Controlled Delay Management}
\newacronym{comac}{COMAC}{Converged Multi-Access and Core}
\newacronym{cord}{CORD}{Central Office Re-architected as a Datacenter}
\newacronym{cornet}{CORNET}{COgnitive Radio NETwork}
\newacronym{cosmos}{COSMOS}{Cloud Enhanced Open Software Defined Mobile Wireless Testbed for City-Scale Deployment}
\newacronym{cots}{COTS}{Commercial Off-the-Shelf}
\newacronym{cp}{CP}{Control Plane}
\newacronym{cyp}{CP}{Cyclic Prefix}
\newacronym{up}{UP}{User Plane}
\newacronym{cpu}{CPU}{Central Processing Unit}
\newacronym{cqi}{CQI}{Channel Quality Information}
\newacronym{cr}{CR}{Cognitive Radio}
\newacronym{cran}{CRAN}{Cloud \gls{ran}}
\newacronym{crs}{CRS}{Cell Reference Signal}
\newacronym{csi}{CSI}{Channel State Information}
\newacronym{csirs}{CSI-RS}{Channel State Information - Reference Signal}
\newacronym{cu}{CU}{Central Unit}
\newacronym{d2tcp}{D$^2$TCP}{Deadline-aware Data center TCP}
\newacronym{d3}{D$^3$}{Deadline-Driven Delivery}
\newacronym{dac}{DAC}{Digital to Analog Converter}
\newacronym{dag}{DAG}{Directed Acyclic Graph}
\newacronym{das}{DAS}{Distributed Antenna System}
\newacronym{dash}{DASH}{Dynamic Adaptive Streaming over HTTP}
\newacronym{dc}{DC}{Dual Connectivity}
\newacronym{dccp}{DCCP}{Datagram Congestion Control Protocol}
\newacronym{dce}{DCE}{Direct Code Execution}
\newacronym{dci}{DCI}{Downlink Control Information}
\newacronym{dctcp}{DCTCP}{Data Center TCP}
\newacronym{dl}{DL}{Downlink}
\newacronym{dmr}{DMR}{Deadline Miss Ratio}
\newacronym{dmrs}{DMRS}{DeModulation Reference Signal}
\newacronym{drlcc}{DRL-CC}{Deep Reinforcement Learning Congestion Control}
\newacronym{drs}{DRS}{Discovery Reference Signal}
\newacronym{du}{DU}{Distributed Unit}
\newacronym{e2e}{E2E}{end-to-end}
\newacronym{earfcn}{EARFCN}{E-UTRA Absolute Radio Frequency Channel Number}
\newacronym{ecaas}{ECaaS}{Edge-Cloud-as-a-Service}
\newacronym{ecn}{ECN}{Explicit Congestion Notification}
\newacronym{edf}{EDF}{Earliest Deadline First}
\newacronym{embb}{eMBB}{Enhanced Mobile Broadband}
\newacronym{empower}{EMPOWER}{EMpowering transatlantic PlatfOrms for advanced WirEless Research}
\newacronym{enb}{eNB}{evolved Node Base}
\newacronym{endc}{EN-DC}{E-UTRAN-\gls{nr} \gls{dc}}
\newacronym{epc}{EPC}{Evolved Packet Core}
\newacronym{eps}{EPS}{Evolved Packet System}
\newacronym{es}{ES}{Edge Server}
\newacronym{etsi}{ETSI}{European Telecommunications Standards Institute}
\newacronym[firstplural=Estimated Times of Arrival (ETAs)]{eta}{ETA}{Estimated Time of Arrival}
\newacronym{eutran}{E-UTRAN}{Evolved Universal Terrestrial Access Network}
\newacronym{faas}{FaaS}{Function-as-a-Service}
\newacronym{fapi}{FAPI}{Functional Application Platform Interface}
\newacronym{fdd}{FDD}{Frequency Division Duplexing}
\newacronym{fdm}{FDM}{Frequency Division Multiplexing}
\newacronym{fdma}{FDMA}{Frequency Division Multiple Access}
\newacronym{fed4fire}{FED4FIRE+}{Federation 4 Future Internet Research and Experimentation Plus}
\newacronym{fir}{FIR}{Finite Impulse Response}
\newacronym{fit}{FIT}{Future \acrlong{iot}}
\newacronym{fpga}{FPGA}{Field Programmable Gate Array}
\newacronym{fr2}{FR2}{Frequency Range 2}
\newacronym{fs}{FS}{Fast Switching}
\newacronym{fscc}{FSCC}{Flow Sharing Congestion Control}
\newacronym{ftp}{FTP}{File Transfer Protocol}
\newacronym{fw}{FW}{Flow Window}
\newacronym{ge}{GE}{Gaussian Elimination}
\newacronym{gnb}{gNB}{Next Generation Node Base}
\newacronym{gop}{GOP}{Group of Pictures}
\newacronym{gpr}{GPR}{Gaussian Process Regressor}
\newacronym{gpu}{GPU}{Graphics Processing Unit}
\newacronym{gtp}{GTP}{GPRS Tunneling Protocol}
\newacronym{gtpc}{GTP-C}{GPRS Tunnelling Protocol Control Plane}
\newacronym{gtpu}{GTP-U}{GPRS Tunnelling Protocol User Plane}
\newacronym{gtpv2c}{GTPv2-C}{\gls{gtp} v2 - Control}
\newacronym{gw}{GW}{Gateway}
\newacronym{harq}{HARQ}{Hybrid Automatic Repeat reQuest}
\newacronym{hetnet}{HetNet}{Heterogeneous Network}
\newacronym{hh}{HH}{Hard Handover}
\newacronym{hol}{HOL}{Head-of-Line}
\newacronym{hqf}{HQF}{Highest-quality-first}
\newacronym{hss}{HSS}{Home Subscription Server}
\newacronym{http}{HTTP}{HyperText Transfer Protocol}
\newacronym{ia}{IA}{Initial Access}
\newacronym{iab}{IAB}{Integrated Access and Backhaul}
\newacronym{ic}{IC}{Incident Command}
\newacronym{ietf}{IETF}{Internet Engineering Task Force}
\newacronym{imsi}{IMSI}{International Mobile Subscriber Identity}
\newacronym{imt}{IMT}{International Mobile Telecommunication}
\newacronym{iot}{IoT}{Internet of Things}
\newacronym{ip}{IP}{Internet Protocol}
\newacronym{itu}{ITU}{International Telecommunication Union}
\newacronym{kpi}{KPI}{Key Performance Indicator}
\newacronym{kpm}{KPM}{Key Performance Measurement}
\newacronym{kvm}{KVM}{Kernel-based Virtual Machine}
\newacronym{los}{LoS}{Line of Sight}
\newacronym{lsm}{LSM}{Link-to-System Mapping}
\newacronym{lstm}{LSTM}{Long Short Term Memory}
\newacronym{lte}{LTE}{Long Term Evolution}
\newacronym{lxc}{LXC}{Linux Container}
\newacronym{m2m}{M2M}{Machine to Machine}
\newacronym{mac}{MAC}{Medium Access Control}
\newacronym{manet}{MANET}{Mobile Ad Hoc Network}
\newacronym{mano}{MANO}{Management and Orchestration}
\newacronym{mc}{MC}{Multi-Connectivity}
\newacronym{mcc}{MCC}{Mobile Cloud Computing}
\newacronym{mchem}{MCHEM}{Massive Channel Emulator}
\newacronym{mcs}{MCS}{Modulation and Coding Scheme}
\newacronym{mec2}{MEC}{Multi-access Edge Computing}
\newacronym{mec}{MEC}{Mobile Edge Computing}
\newacronym{mfc}{MFC}{Mobile Fog Computing}
\newacronym{mgen}{MGEN}{Multi-Generator}
\newacronym{mi}{MI}{Mutual Information}
\newacronym{mib}{MIB}{Master Information Block}
\newacronym{miesm}{MIESM}{Mutual Information Based Effective SINR}
\newacronym{mimo}{MIMO}{Multiple Input, Multiple Output}
\newacronym{ml}{ML}{Machine Learning}
\newacronym{mlr}{MLR}{Maximum-local-rate}
\newacronym[plural=\gls{mme}s,firstplural=Mobility Management Entities (MMEs)]{mme}{MME}{Mobility Management Entity}
\newacronym{mmtc}{mMTC}{Massive Machine-Type Communications}
\newacronym{mmwave}{mmWave}{millimeter wave}
\newacronym{mpdccp}{MP-DCCP}{Multipath Datagram Congestion Control Protocol}
\newacronym{mptcp}{MPTCP}{Multipath TCP}
\newacronym{mr}{MR}{Maximum Rate}
\newacronym{mrdc}{MR-DC}{Multi \gls{rat} \gls{dc}}
\newacronym{mse}{MSE}{Mean Square Error}
\newacronym{mss}{MSS}{Maximum Segment Size}
\newacronym{mt}{MT}{Mobile Termination}
\newacronym{mtd}{MTD}{Machine-Type Device}
\newacronym{mtu}{MTU}{Maximum Transmission Unit}
\newacronym{mumimo}{MU-MIMO}{Multi-user \gls{mimo}}
\newacronym{mvno}{MVNO}{Mobile Virtual Network Operator}
\newacronym{nalu}{NALU}{Network Abstraction Layer Unit}
\newacronym{nas}{NAS}{Network Attached Storage}
\newacronym{nat}{NAT}{Network Address Translation}
\newacronym{nbiot}{NB-IoT}{Narrow Band IoT}
\newacronym{nfv}{NFV}{Network Function Virtualization}
\newacronym{nfvi}{NFVI}{Network Function Virtualization Infrastructure}
\newacronym{ni}{NI}{Network Interfaces}
\newacronym{nic}{NIC}{Network Interface Card}
\newacronym{now}{NOW}{Non Overlapping Window}
\newacronym{nsm}{NSM}{Network Service Mesh}
\newacronym{nr}{NR}{New Radio}
\newacronym{nrf}{NRF}{Network Repository Function}
\newacronym{nsa}{NSA}{Non Stand Alone}
\newacronym{nse}{NSE}{Network Slicing Engine}
\newacronym{nssf}{NSSF}{Network Slice Selection Function}
\newacronym{o2i}{O2I}{Outdoor to Indoor}
\newacronym{oai}{OAI}{OpenAirInterface}
\newacronym{oaicn}{OAI-CN}{\gls{oai} \acrlong{cn}}
\newacronym{oairan}{OAI-RAN}{\acrlong{oai} \acrlong{ran}}
\newacronym{oam}{OAM}{Operations, Administration and Maintenance}
\newacronym{ofdm}{OFDM}{Orthogonal Frequency Division Multiplexing}
\newacronym{olia}{OLIA}{Opportunistic Linked Increase Algorithm}
\newacronym{omec}{OMEC}{Open Mobile Evolved Core}
\newacronym{onap}{ONAP}{Open Network Automation Platform}
\newacronym{onf}{ONF}{Open Networking Foundation}
\newacronym{onos}{ONOS}{Open Networking Operating System}
\newacronym{oom}{OOM}{\gls{onap} Operations Manager}
\newacronym{opnfv}{OPNFV}{Open Platform for \gls{nfv}}
\newacronym{oran}{O-RAN}{Open Radio Access Network}
\newacronym{orbit}{ORBIT}{Open-Access Research Testbed for Next-Generation Wireless Networks}
\newacronym{os}{OS}{Operating System}
\newacronym{oss}{OSS}{Operations Support System}
\newacronym{pa}{PA}{Position-aware}
\newacronym{pase}{PASE}{Prioritization, Arbitration, and Self-adjusting Endpoints}
\newacronym{pawr}{PAWR}{Platforms for Advanced Wireless Research}
\newacronym{pbch}{PBCH}{Physical Broadcast Channel}
\newacronym{pcef}{PCEF}{Policy and Charging Enforcement Function}
\newacronym{pcfich}{PCFICH}{Physical Control Format Indicator Channel}
\newacronym{pcrf}{PCRF}{Policy and Charging Rules Function}
\newacronym{pdcch}{PDCCH}{Physical Downlink Control Channel}
\newacronym{pdcp}{PDCP}{Packet Data Convergence Protocol}
\newacronym{pdsch}{PDSCH}{Physical Downlink Shared Channel}
\newacronym{pdu}{PDU}{Packet Data Unit}
\newacronym{pf}{PF}{Proportional Fair}
\newacronym{pgw}{PGW}{Packet Gateway}
\newacronym{phich}{PHICH}{Physical Hybrid ARQ Indicator Channel}
\newacronym{phy}{PHY}{Physical}
\newacronym{pmch}{PMCH}{Physical Multicast Channel}
\newacronym{pmi}{PMI}{Precoding Matrix Indicators}
\newacronym{powder}{POWDER}{Platform for Open Wireless Data-driven Experimental Research}
\newacronym{ppo}{PPO}{Proximal Policy Optimization}
\newacronym{ppp}{PPP}{Poisson Point Process}
\newacronym{prach}{PRACH}{Physical Random Access Channel}
\newacronym{prb}{PRB}{Physical Resource Block}
\newacronym{psnr}{PSNR}{Peak Signal to Noise Ratio}
\newacronym{pss}{PSS}{Primary Synchronization Signal}
\newacronym{pucch}{PUCCH}{Physical Uplink Control Channel}
\newacronym{pusch}{PUSCH}{Physical Uplink Shared Channel}
\newacronym{qam}{QAM}{Quadrature Amplitude Modulation}
\newacronym{qci}{QCI}{\gls{qos} Class Identifier}
\newacronym{qoe}{QoE}{Quality of Experience}
\newacronym{qos}{QoS}{Quality of Service}
\newacronym{quic}{QUIC}{Quick UDP Internet Connections}
\newacronym{rach}{RACH}{Random Access Channel}
\newacronym{ran}{RAN}{Radio Access Network}
\newacronym[firstplural=Radio Access Technologies (RATs)]{rat}{RAT}{Radio Access Technology}
\newacronym{rbg}{RBG}{Resource Block Group}
\newacronym{rcn}{RCN}{Research Coordination Network}
\newacronym{rc}{RC}{RAN Control}
\newacronym{rec}{REC}{Radio Edge Cloud}
\newacronym{red}{RED}{Random Early Detection}
\newacronym{renew}{RENEW}{Reconfigurable Eco-system for Next-generation End-to-end Wireless}
\newacronym{rf}{RF}{Radio Frequency}
\newacronym{rfc}{RFC}{Request for Comments}
\newacronym{rfr}{RFR}{Random Forest Regressor}
\newacronym{ric}{RIC}{\gls{ran} Intelligent Controller}
\newacronym{rlc}{RLC}{Radio Link Control}
\newacronym{rlf}{RLF}{Radio Link Failure}
\newacronym{rlnc}{RLNC}{Random Linear Network Coding}
\newacronym{rmr}{RMR}{RIC Message Router}
\newacronym{rmse}{RMSE}{Root Mean Squared Error}
\newacronym{rnis}{RNIS}{Radio Network Information Service}
\newacronym{rr}{RR}{Round Robin}
\newacronym{rrc}{RRC}{Radio Resource Control}
\newacronym{rrm}{RRM}{Radio Resource Management}
\newacronym{rru}{RRU}{Remote Radio Unit}
\newacronym{rs}{RS}{Remote Server}
\newacronym{rsrp}{RSRP}{Reference Signal Received Power}
\newacronym{rsrq}{RSRQ}{Reference Signal Received Quality}
\newacronym{rss}{RSS}{Received Signal Strength}
\newacronym{rssi}{RSSI}{Received Signal Strength Indicator}
\newacronym{rtt}{RTT}{Round Trip Time}
\newacronym{ru}{RU}{Radio Unit}
\newacronym{rw}{RW}{Receive Window}
\newacronym{rx}{RX}{Receiver}
\newacronym{s1ap}{S1AP}{S1 Application Protocol}
\newacronym{sa}{SA}{standalone}
\newacronym{sack}{SACK}{Selective Acknowledgment}
\newacronym{sap}{SAP}{Service Access Point}
\newacronym{sc2}{SC2}{Spectrum Collaboration Challenge}
\newacronym{scef}{SCEF}{Service Capability Exposure Function}
\newacronym{sch}{SCH}{Secondary Cell Handover}
\newacronym{scoot}{SCOOT}{Split Cycle Offset Optimization Technique}
\newacronym{sctp}{SCTP}{Stream Control Transmission Protocol}
\newacronym{sdap}{SDAP}{Service Data Adaptation Protocol}
\newacronym{sdk}{SDK}{Software Development Kit}
\newacronym{sdm}{SDM}{Space Division Multiplexing}
\newacronym{sdma}{SDMA}{Spatial Division Multiple Access}
\newacronym{sdn}{SDN}{Software-defined Networking}
\newacronym{sdr}{SDR}{Software-defined Radio}
\newacronym{seba}{SEBA}{SDN-Enabled Broadband Access}
\newacronym{sgsn}{SGSN}{Serving GPRS Support Node}
\newacronym{sgw}{SGW}{Service Gateway}
\newacronym{si}{SI}{Study Item}
\newacronym{sib}{SIB}{Secondary Information Block}
\newacronym{sinr}{SINR}{Signal to Interference plus Noise Ratio}
\newacronym{sip}{SIP}{Session Initiation Protocol}
\newacronym{siso}{SISO}{Single Input, Single Output}
\newacronym{sla}{SLA}{Service Level Agreement}
\newacronym{sm}{SM}{Service Model}
\newacronym{smo}{SMO}{Service Management and Orchestration}
\newacronym{smsgmsc}{SMS-GMSC}{\gls{sms}-Gateway}
\newacronym{snr}{SNR}{Signal-to-Noise-Ratio}
\newacronym{son}{SON}{Self-Organizing Network}
\newacronym{sptcp}{SPTCP}{Single Path TCP}
\newacronym{srb}{SRB}{Service Radio Bearer}
\newacronym{srn}{SRN}{Standard Radio Node}
\newacronym{srs}{SRS}{Sounding Reference Signal}
\newacronym{ss}{SS}{Synchronization Signal}
\newacronym{sss}{SSS}{Secondary Synchronization Signal}
\newacronym{st}{ST}{Spanning Tree}
\newacronym{svc}{SVC}{Scalable Video Coding}
\newacronym{tb}{TB}{Transport Block}
\newacronym{tcp}{TCP}{Transmission Control Protocol}
\newacronym{tdd}{TDD}{Time Division Duplexing}
\newacronym{tdm}{TDM}{Time Division Multiplexing}
\newacronym{tdma}{TDMA}{Time Division Multiple Access}
\newacronym{tfl}{TfL}{Transport for London}
\newacronym{tfrc}{TFRC}{TCP-Friendly Rate Control}
\newacronym{tft}{TFT}{Traffic Flow Template}
\newacronym{tgen}{TGEN}{Traffic Generator}
\newacronym{tip}{TIP}{Telecom Infra Project}
\newacronym{tm}{TM}{Transparent Mode}
\newacronym{to}{TO}{Telco Operator}
\newacronym{tr}{TR}{Technical Report}
\newacronym{trp}{TRP}{Transmitter Receiver Pair}
\newacronym{ts}{TS}{Technical Specification}
\newacronym{tti}{TTI}{Transmission Time Interval}
\newacronym{ttt}{TTT}{Time-to-Trigger}
\newacronym{tx}{TX}{Transmitter}
\newacronym{uas}{UAS}{Unmanned Aerial System}
\newacronym{uav}{UAV}{Unmanned Aerial Vehicle}
\newacronym{udm}{UDM}{Unified Data Management}
\newacronym{udp}{UDP}{User Datagram Protocol}
\newacronym{udr}{UDR}{Unified Data Repository}
\newacronym{ue}{UE}{User Equipment}
\newacronym{uhd}{UHD}{\gls{usrp} Hardware Driver}
\newacronym{ul}{UL}{Uplink}
\newacronym{um}{UM}{Unacknowledged Mode}
\newacronym{uml}{UML}{Unified Modeling Language}
\newacronym{upa}{UPA}{Uniform Planar Array}
\newacronym{upf}{UPF}{User Plane Function}
\newacronym{urllc}{URLLC}{Ultra Reliable and Low Latency Communications}
\newacronym{usa}{U.S.}{United States}
\newacronym{usim}{USIM}{Universal Subscriber Identity Module}
\newacronym{usrp}{USRP}{Universal Software Radio Peripheral}
\newacronym{utc}{UTC}{Urban Traffic Control}
\newacronym{vim}{VIM}{Virtualization Infrastructure Manager}
\newacronym{vm}{VM}{Virtual Machine}
\newacronym{vnf}{VNF}{Virtual Network Function}
\newacronym{volte}{VoLTE}{Voice over \gls{lte}}
\newacronym{voltha}{VOLTHA}{Virtual OLT HArdware Abstraction}
\newacronym{vr}{VR}{Virtual Reality}
\newacronym{vran}{vRAN}{Virtualized \gls{ran}}
\newacronym{vss}{VSS}{Video Streaming Server}
\newacronym{wbf}{WBF}{Wired Bias Function}
\newacronym{wf}{WF}{Waterfilling}
\newacronym{wg}{WG}{Working Group}
\newacronym{wlan}{WLAN}{Wireless Local Area Network}
\newacronym{osm}{OSM}{Open Source \gls{nfv} Management and Orchestration}
\newacronym{pnf}{PNF}{Physical Network Function}
\newacronym{drl}{DRL}{Deep Reinforcement Learning}
\newacronym{mtc}{MTC}{Machine-type Communications}
\newacronym{osc}{OSC}{O-RAN Software Community}
\newacronym{mns}{MnS}{Management Services}
\newacronym{ves}{VES}{\gls{vnf} Event Stream}
\newacronym{ei}{EI}{Enrichment Information}
\newacronym{fh}{FH}{Fronthaul}
\newacronym{fft}{FFT}{Fast Fourier Transform}
\newacronym{laa}{LAA}{Licensed-Assisted Access}
\newacronym{plfs}{PLFS}{Physical Layer Frequency Signals}
\newacronym{ptp}{PTP}{Precision Time Protocol}
\newacronym{lidar}{LiDAR}{Light Detection And Ranging}
\newacronym{dem}{DEM}{Digital Elevation Model}
\newacronym{dtm}{DEM}{Digital Terrain Model}
\newacronym{dsm}{DEM}{Digital Surface Models}
\newacronym{ota}{OTA}{Over-The-Air}
\newacronym{ns}{NS}{Network Slicing}
\newacronym{ne}{NE}{Nash Equilibrium}
\newacronym{hf}{HF}{High Frequency}
\newacronym{noma}{NOMA}{Non-Orthogonal Multiple Access}
\newacronym{sre}{SRE}{Smart Radio Environment}
\newacronym{ris}{RIS}{Reconfigurable Intelligent Surface}
\newacronym{inp}{InP}{Infrastructure Provider}
\newacronym{smf}{SMF}{Slicing Magangement Framework}
\newacronym{nsn}{NSN}{Network Slicing Negotiation}
\newacronym{sms}{SMS}{Slicing MAC Scheduler}
\newacronym{brd}{BRD}{Best Response Dynamics}
\newacronym{dssbr}{DSSBR}{Double Step Smoothed Best Response}
\newacronym{poa}{PoA}{Price of Anarchy}
\newacronym{pos}{PoS}{Price of Stability}
\newacronym{milp}{MILP}{Mixed Integer-Linear Program}
\newacronym{pod}{PoD}{Price of DSSBR}
\newacronym{roc}{ROC}{Radio Overload Control}
\newacronym{ciot}{cIoT}{critical Internet of Things}
\newacronym{embbpr}{eMBB Pr.}{enhanced Mobile BroadBand Premium}
\newacronym{embbbs}{eMBB Bs.}{enhanced Mobile BroadBand Basic}
\newacronym{en}{EN}{Edge Node}
\newacronym{ec}{EC}{Edge Computing}
\newacronym{sp}{SP}{Service Provider}
\newacronym{me}{ME}{Market Equilibrium}
\newacronym{so}{SO}{Social Optimum}
\newacronym{wso}{WSO}{Weighted Social Optimum}
\newacronym{wsn}{WSN}{Wireless Sensor Network}
\newacronym{ps}{PS}{Proportional Sharing}
\newacronym{eg}{EG}{Eisenberg-Gale program}
\newacronym{pe}{PE}{Pareto Efficiency}
\newacronym{nsw}{NSW}{Nash Social Welfare}
\newacronym{ef}{EF}{Envy-Freeness}
\newacronym{sub6}{sub6GHz}{Below 6GHz}
\newacronym{ncr}{NCR}{Network-Controlled Repeater}
\newacronym{nlos}{NLoS}{Non-LoS}
\newacronym{src}{SRC}{Smart Radio Connection}
\newacronym{srd}{SRD}{Smart Radio Device}
\newacronym{cs}{CS}{Candidate Site}
\newacronym{tp}{TP}{Test Point}
\newacronym{fov}{FoV}{Field of View}
\newacronym{nrric}{Near-RT RIC}{Near Real-time RAN Intelligent Controller}
\newacronym{e2ap}{E2AP}{E2 Application Protocol}
\newacronym{e2sm}{E2SM}{E2 Service Model}
\newacronym{nrtric}{Non-RT RIC}{Non-Real-Time Ran Intelligent Controller}
\newacronym{itti}{ITTI}{Inter-task Interface}
\newacronym{bap}{BAP}{Backhaul Adaptation Protocol}
\newacronym{iabest}{IABEST}{Integrated Access and Backhaul Experimental large-Scale Tetbed}
\newacronym{teid}{TEID}{Tunnel Endpoint Identifier}
\newacronym{dlsch}{DL-SCH}{Downlink Shared Channel }
\newacronym{ulsch}{UL-SCH}{Uplink Shared Channel }
\newacronym{opex}{OpEx}{Operational Expenditure}
\newacronym{capex}{CapEx}{Capital Expenditure}
\newacronym{mno}{MNO}{Mobile Network Operator}
\begin{document}
\bstctlcite{IEEEexample:BSTcontrol}

\title{Joint Routing and Energy Optimization for\\Integrated Access and Backhaul with Open RAN}

\author{
    \IEEEauthorblockN{Gabriele Gemmi\IEEEauthorrefmark{1}\textsuperscript{\textsection}, Maxime Elkael\IEEEauthorrefmark{2}\textsuperscript{\textsection}, Michele Polese\IEEEauthorrefmark{3}, Leonardo Maccari\IEEEauthorrefmark{1}, Hind Castel-Taleb\IEEEauthorrefmark{2}, Tommaso Melodia\IEEEauthorrefmark{3}}\\

\IEEEauthorblockA{\IEEEauthorrefmark{1}Department of
Environmental Sciences, Informatics and Statistics,  Ca' Foscari University of Venice, Italy.}
    \IEEEauthorblockA{\IEEEauthorrefmark{2}SAMOVAR, Telecom Sud-Paris, Institut Polytechnique de Paris, France}
    \IEEEauthorblockA{\IEEEauthorrefmark{3}Institute for the Wireless Internet of Things, Northeastern University, Boston, MA, U.S.A.
    \\gabriele.gemmi@unive.it}
    
    \thanks{This work was partially supported by Agence Nationale pour la Recherche through the AIDY-F2N project, grant number ANR-19-LCV2-0012, and by OUSD(R\&E) through Army Research Laboratory Cooperative Agreement Number W911NF-19-2-0221. The views and conclusions contained in this document are those of the authors and should not be interpreted as representing the official policies, either expressed or implied, of the Army Research Laboratory or the U.S. Government. The U.S. Government is authorized to reproduce and distribute reprints for Government purposes notwithstanding any copyright notation herein.}
}

\maketitle
\begingroup\renewcommand\thefootnote{\textsection}
\footnotetext{Equal contribution}

\begin{abstract}
Energy consumption represents a major part of the operating expenses of mobile network operators. With the densification foreseen with 5G and beyond, energy optimization has become a problem of crucial importance. While energy optimization is widely studied in the literature, there are limited insights and algorithms for energy-saving techniques for Integrated Access and Backhaul (IAB), a self-backhauling architecture that ease deployment of dense cellular networks reducing the number of fiber drops. This paper proposes a novel optimization model for dynamic joint routing and energy optimization in IAB networks. We leverage the closed-loop control framework introduced by the Open Radio Access Network (O-RAN) architecture to minimize the number of active IAB nodes while maintaining a minimum capacity per User Equipment (UE). The proposed approach formulates the problem as a binary nonlinear program, which is transformed into an equivalent binary linear program and solved using the Gurobi solver. The approach is evaluated on a scenario built upon open data of two months of traffic collected by network operators in the city of Milan, Italy. Results show that the proposed optimization model reduces the RAN energy consumption by 47\%, while guaranteeing a minimum capacity for each UE.
\end{abstract}

\begin{IEEEkeywords}
Energy Optimization, Integrated Access and Backhaul, O-RAN, 5G
\end{IEEEkeywords}

\begin{picture}(0,0)(10,-440)
    \put(0,0){
    \put(0,0){\footnotesize \scshape This paper has been accepted for publication on IEEE GLOBECOM 2023 Global Communications Conference.}
     \put(0,-10){
     \scriptsize\scshape \textcopyright~2023 IEEE. Personal use of this material is permitted. Permission from IEEE must be obtained for all other uses, in any current or future media, including}
     \put(0, -17){
     \scriptsize\scshape reprinting/republishing this material for advertising or promotional purposes, creating new collective works, for resale or redistribution to servers or}
     \put(0, -24){
     \scriptsize\scshape lists, or reuse of any copyrighted component of this work in other works.}
     }
 \end{picture}

\section{Introduction}\label{sect:intro}
Ultra-dense deployment and \gls{mmwave} have been portrayed as the solution to meet the stringent requirements standardized with 5G in terms of data rates \cite{perez2015towards}. As actually proven by many studies, \gls{mmwave} is capable of providing multi-gigabit connectivity to \glspl{ue} \cite{rappaport2013millimeter, akdeniz2014millimeter}, and \gls{iab} has been proven to be an effective way to reduce the deployment costs \cite{polese2020integrated}.
This technology, introduced in 3GPP Release 16, allows, in fact, connecting only a subset of the \glspl{gnb}, called \gls{iab}-donors, to the fiber backhaul while the rest of the \glspl{gnb}, called \gls{iab}-node, rely on in-band wireless communication to reach one of the donors, forming a multihop wireless network.

By dynamically activating and deactivating \glspl{gnb} with respect to the current load of the network, it is possible to reduce the energy footprint of the system, switching off \gls{iab}-nodes that are not strictly necessary to match the requested level of service. This is of critical importance since energy consumption accounts for up to 60\% of the \gls{opex} \cite{gsma2020energy}.  Research on energy optimization techniques for traditional networks usually assumes that all \glspl{gnb} are connected to the fiber backhaul. The presence of wireless-only \gls{iab}-nodes, however, increases the complexity of the scenario. 
In fact, the deactivation of an \gls{iab}-node might disrupt the service of another \gls{iab}-node that is relying on it. Moreover, \gls{iab}-nodes need to support periodic wake-up to update their radio state in order to dynamically change the topology when needed.

Thanks to the effort led by the \gls{oran} Alliance, which has opened the \gls{ran} architecture by introducing interfaces such as the E2 and O1 and the concept of \gls{ric} \cite{polese2023understanding}, it is now possible to integrate custom closed-loop control logic in the \gls{ran}. This has been already studied for several applications, such as network slicing \cite{doro2022orchestran}, and, more recently, an extension of the O-RAN architecture has been proposed for \gls{iab} \cite{moro2023toward}.

Our novel optimization approach takes advantage of this closed-loop control framework to overcome the limitations discussed above and dynamically minimize the number of active \gls{iab}-nodes, while maintaining a minimum capacity per \gls{ue}. The optimization---based on input data that can be obtained through O-RAN interfaces---generates a topology tree over which we route the traffic from each \gls{ue} to the \gls{iab}-donor, deactivating \gls{iab}-nodes that are not needed and distributing \glspl{ue} across the available \glspl{gnb}.

To the best of our knowledge, joint routing and energy optimization on multi-hop \gls{iab} topologies has never been studied before. Most studies focus on optimizing the \gls{iab} topology with different constraints, such as in \cite{IslamIAB2017} where the fiber-deployment cost is minimized or in \cite{Islam2018} where the \gls{ue} data rates are maximized. 
Other studies, more focused on power and energy-related optimization of \gls{iab} networks, exist, but they either optimize the energy consumption after the topology  and routing have been chosen, such as in \cite{meng2018energy}, or they are restricted to a single-hop architecture, such as in \cite{lei2018noma}, where a low energy multiple access scheme is designed.

Optimization for energy consumption is a common concern in both this study and \glspl{wsn}. However, the goals differ: \glspl{wsn} aim to extend network lifetime due to their battery-powered nodes, as in \cite{rault2014energy}, while our grid-powered nodes target different optimization criteria.

Hence, in this paper, we fill this gap by formulating the problem as a binary nonlinear program. Since its continuous relaxation is non-convex, it is not solvable using off-the-shelf solvers, therefore we show it can be transformed into an equivalent binary linear program, which we then solve using the Gurobi solver. 
The approach is evaluated on a scenario built upon open data of two months of traffic collected by network operators in the city of Milan, Italy, \cite{barlacchi2015multi} together with detailed morphological data of the same area. Our optimization model manages to perfectly tune the number of active \glspl{gnb} with the number of \gls{ue}, reducing by 47\% the total number of hours the \glspl{gnb} (i.e., IAB-nodes) had been active, while maintaining a minimum downlink capacity for each \gls{ue} equals to 80Mb/s. This shows how introducing dynamic optimizations enabled by the O-RAN architecture can effectively target improvements in energy efficiency.

\section{System Model and Optimization}\label{sect:problem}
Let us first introduce the problem formally. We start from a weighted directed graph $\pazocal G = (\pazocal V, \pazocal E)$, called measurements graph,  whose nodes can be either \gls{iab}-nodes or \glspl{ue}. We denote the set of \glspl{ue} as $\pazocal U \subset \pazocal V$ and the \gls{iab}-donor as $t \in V$.
Each edge $(u,v)$ of this graph represents a potentially usable wireless link between each node and it is weighted by its available capacity ($c(u,v)$), which depends on the channel quality. 
Since the goal is to find a tree representing the routing from each \gls{ue} to towards the donor $t$, the edges of the graph will be directed accordingly. 
Access links (originating from the \glspl{ue}) will always have \glspl{ue} as source and \gls{iab}-nodes as destination. 
Backhaul links involving $t$ will always point towards it, as it is always the destination to reach the core. The links between \gls{iab}-nodes instead can be used in one or the other direction to build the \gls{iab}-tree, so the measurements graph contains a couple of links per each neighbor \gls{iab}-node pair. \cref{subfig:graph} and \cref{subfig:tree} report an example of a measurements graph and a possible \gls{iab}-tree.

Local detailed information on the feasibility of wireless links between \glspl{ue} and \gls{iab}-nodes is available on each \glspl{gnb}. The \gls{oran} architecture allows extensions to standard interfaces so that we can assume that the local information can be collected by an rApp, running on the non-real-time \gls{ric}, which reconstructs the measurements graph we mentioned above. 
Then, the optimization algorithm periodically runs and pushes the optimized topology to the \gls{ran} through the O1 interface. Note that we take into consideration periodic updates of the topology with a period in the order of minutes, so we assume that disabled nodes wake up to receive an updated topology with a similar schedule. This schedule is also perfectly compatible with the non-real-time \gls{ric} closed-loop time constraints.
Without loss of generality in the following model we will assume the optimization of a single tree, but the proposed optimization model can be trivially adapted to optimize multiple trees.

\begin{figure}
    \hfill
    \begin{subfigure}{.3\linewidth}
      \centering
      \begin{tikzpicture}
    \definecolor{ue}{RGB}{0,0,0}
    \definecolor{relay}{RGB}{0,0,255}
    \definecolor{donor}{RGB}{178,34,34}

    \node[circle, draw, thin, fill=ue, scale=0.7] (ue1) at (0, 0) {};
    \node[circle, draw, thin, fill=ue, scale=0.7] (ue2) at (0, 1) {};
    
    \node[circle, draw, thin, fill=relay] (relay1) at (0.5, 0.5) {};

    \node[circle, draw, thin, fill=relay] (relay2) at (1.5, 1.5) {};
    \node[circle, draw, thin, fill=ue, scale=0.7] (ue3) at (0.7, 1.2) {};
    \node[circle, draw, thin, fill=ue, scale=0.7] (ue4) at (2.1, 1.) {};

    \node[circle, draw, thin, fill=relay] (relay3) at (2.3, .3) {};
    \node[circle, draw, thin, fill=ue, scale=0.7] (ue5) at (1.4, 0) {};

    \node[circle, draw, thin, fill=donor,scale=1.3] (donor) at (1.3, 0.9) {};

    \draw [-to](ue1) -- (relay1);
    \draw [-to](ue2) -- (relay1);
    \draw [-to](ue3) -- (relay1);
    \draw [-to](ue3) -- (donor);
    \draw [-to](ue3) -- (relay2);
    \draw [-to](ue4) -- (relay2);
    \draw [-to](ue4) -- (donor);
    \draw [-to](ue4) -- (relay2);
    \draw [-to](ue4) -- (relay3);
    \draw [->](ue5) to[out=180-10, in=-50] (relay1);
    \draw [->](ue5) to[out=-20, in=-100] (relay3);

    \draw [-to](relay1) -- (donor);
    \draw [-to](relay2) -- (donor);
    \draw [-to](relay3) -- (donor);

    \draw [->](relay1) to[out=10, in=-200] (relay3);
    \draw [->](relay3) to[out=180+10, in=180-200] (relay1);

\end{tikzpicture}
        \subcaption[]{}
        \label{subfig:graph}
    \end{subfigure}%
    \hfill
    \begin{subfigure}{.3\linewidth}
      \centering
      \begin{tikzpicture}
    \definecolor{ue}{RGB}{0,0,0}
    \definecolor{relay}{RGB}{0,0,255}
    \definecolor{donor}{RGB}{178,34,34}

    \node[circle, draw, thin, fill=ue, scale=0.7] (ue1) at (0, 0) {};
    \node[circle, draw, thin, fill=ue, scale=0.7] (ue2) at (0, 1) {};
    
    \node[circle, draw, thin, fill=relay] (relay1) at (0.5, 0.5) {};

    \node[circle, draw, thin, fill=relay, opacity=0.2] (relay2) at (1.5, 1.5) {};
    \node[circle, draw, thin, fill=ue, scale=0.7] (ue3) at (0.7, 1.2) {};
    \node[circle, draw, thin, fill=ue, scale=0.7] (ue4) at (2.1, 1.) {};

    \node[circle, draw, thin, fill=relay] (relay3) at (2.3, .3) {};
    \node[circle, draw, thin, fill=ue, scale=0.7] (ue5) at (1.4, .1) {};

    \node[circle, draw, thin, fill=donor,scale=1.3] (donor) at (1.3, 0.9) {};

    \draw [-to](ue1) -- (relay1);
    \draw [-to](ue2) -- (relay1);
    \draw [-to](ue3) -- (donor);
    \draw [-to](ue4) -- (relay3);
    \draw [-to](ue5) -- (relay3);

    \draw [-to](relay1) -- (donor);
    \draw [-to](relay3) -- (relay1);

\end{tikzpicture}
      \subcaption[]{}
      \label{subfig:tree}
    \end{subfigure}%
    \hfill~
    \vspace{5pt}
    \caption{Example of a measurements graph $\pazocal G$ (a) and a possible IAB Tree $\pazocal T$ (b). \gls{iab}-donors are depicted in red, \gls{iab}-nodes in blue, \glspl{ue} in black, and deactivated \gls{iab}-nodes in light blue.}\label{fig:graph}
\end{figure}
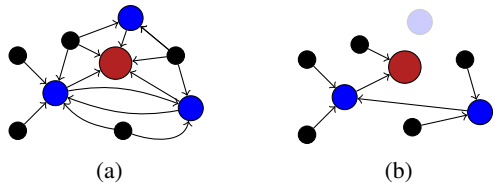

\subsection{Optimization problem}
Our optimization model aims to identify a tree, denoted as $\pazocal T$, which is a subgraph of $\pazocal G$, rooted in the \gls{iab}-donor, and whose leaves are the \glspl{ue}. The primary objective of $\pazocal T$ is to minimize the energy consumption of the network. This consumption is generally viewed as a combination of static energy, which is expended even when an \gls{iab}-node remains idle, and dynamic energy, which depends on the volume of radio resources the \gls{iab}-node has to serve. As per the findings in \cite{piovesan2022machine}, static energy constitutes over 70\% of the peak energy consumption by a \gls{gnb}. Given this significant skew towards static energy, our model simplifies to focus predominantly on static energy. The resultant objective is to construct the tree $\pazocal T$ such that the activation of \gls{iab}-nodes is minimized.

Additionally, since the whole network operates using the same spectrum, we assume that each node has a \gls{tdma} scheduler that operates using a round-robin policy to schedule the inbound traffic and a dedicated radio device to relay the outbound traffic. This additional constraint---which follows guidance from 3GPP technical documents~\cite{3gpp.38.874}---differentiates our model from a classical multicommodity flow problem, where adjacent edges do not have to share the same time resources as in a wireless network.

We begin the formulation of the problem as a binary multicommodity flow problem. In such a problem, we have to route a set $\pazocal K$ of commodities on the graph, each using a single path. A commodity $k \in \pazocal K$ is defined as a triplet $s_k, t_k, d_k$ where $s_k$ is the source node (in our case, a \gls{ue}), $t_k$ is the destination node (in our case, the \gls{iab}-donor $t$) and $d_k \in \mathbb{R}$ is the bandwidth to reserve on the path from $s_k$ to $t_k$. 
These commodities are decided by the \gls{mno} beforehand, depending on the minimum capacity it wants to guarantee to its customers, and might be differentiated by different classes. The \gls{mno} can feed this information to the rApp running the optimization problem.
We denote by $\pazocal N_{out}(v)$ the outer neighbors of node $v$ and by $\pazocal N_{in}(v)$ its inner neighbors. The cardinality of these sets (\textit{e.g.,} the outer and inner degrees) are denoted by $out(v)$ and $in(v)$, and their sum (the degree of the node) $deg(v) = out(v) + in(v)$.
Let us introduce the binary variables $a(v)~\forall v \in \pazocal V$ which indicate whether node $v$ is turned on or sleeping, binary variables $f_k(u,v)$ which indicate whether commodity $k$ uses edge $(u,v) \in \pazocal E$, and binary variables $f(u,v)$ which indicate whether $(u,v)$ is used by any commodity.
We define the problem as the following binary non-linear programming problem:
\vspace{-1em}

\begin{small}
\allowdisplaybreaks
    \begin{align}
        \min_{}\quad & \sum_{v \in \pazocal{V}} a(v) \label{eq:objective}
        \\
  & \begin{aligned}
        \text{s.t.\quad} & \sum_{k \in \pazocal K} f_k(u,v) \cdot d_k  \leq c(u,v) \\&\quad\quad\quad\quad\times \frac{1}{\sum\limits_{w \in \pazocal N_{in}(v)}f(w, v) }& \forall (u,v) \in \pazocal{E}, \forall k \in \pazocal K\label{eq:capacity}    
        \end{aligned}
         \\
        & \sum\limits_{v \in \pazocal V}f_k(u,v) - \sum\limits_{v \in \pazocal V}f_k(v, u) = 0~~~~~~~ \forall u \in \pazocal{V}, \forall k \in \pazocal K \label{eq:equilibrium}\\
        &\sum\limits_{v \in \pazocal V} f_k(s_k,v) - \sum\limits_{v \in \pazocal V} f_k(v, s_k) = 1~~~~~~~~~~~~~~~\forall k \in \pazocal K\label{eq:source}\\
        &\sum\limits_{v \in \pazocal V} f_k(v, t_k)  - \sum\limits_{v \in \pazocal V} f_k(t_k, v) = -1 ~~~~~~~~~~~~~ \forall k \in \pazocal K \label{eq:destination}\\
        &
        \begin{aligned}
            a(v) \geq \frac{1}{deg(v)} \left[ \sum_{\forall u \in \pazocal N_{in}(v)} f(u,v) + \sum_{\forall u \in \pazocal N_{out}(v)} f(v, u) \right] \\ \forall (v) \in \pazocal E \label{eq:activation}
        \end{aligned}\\
        &f(u,v) \geq f_k(u,v) ~~~~~~~~~~~~~~~~~~~ \forall (u,v) \in \pazocal E, k \in \pazocal K \label{eq:aggregation}\\
        &\sum\limits_{v \in \pazocal N_{out}(u)} f(u,v) \leq 1 ~~~~~~~~~~~~~~~~~~~~~~~~~~~~~~ \forall u \in \pazocal V \label{eq:tree}\\
        & a(v), f(u, v), f_i(u,v) \in \{0, 1\} \label{eq:binary}
      \end{align}
\end{small}

Our objective in \cref{eq:objective} is to minimize the number of nodes that are turned on, \textit{e.g.,} the energy consumption of the network. 
In \cref{eq:activation}, the value of variable $a(v)$ is enforced to be $1$ if any flow uses node $v$. \cref{eq:capacity,eq:equilibrium,eq:source,eq:destination} are multi-commodity flow constraints, where \cref{eq:equilibrium,eq:source,eq:destination} enforce the equilibrium of the flow and \cref{eq:capacity} ensures the capacity constraints are respected. This constraint is different from the classic multicommodity flow problem, in which it would be
$$f_k(u,v) \cdot d_k  \leq c(u,v)  ~~~~~~\forall (u,v) \in \pazocal{E}, \forall k \in \pazocal K\label{eq:capacity_classic}.$$
In fact, as mentioned above, in a wireless network the edges adjacent to the same node need to share the spectrum, typically by using \gls{tdma} with a specific scheduler. In our case, we have assumed that a Round Robin scheduler allocates equal resources to all the adjacent edges.
Finally, the constraint in \Cref{eq:aggregation} ensures that an edge is activated if any commodity uses it and \cref{eq:tree} makes sure all activated nodes have outer degree 1, which implies the network is a tree.

This model is non-linear because of the inverse function in Equation \cref{eq:capacity}.
We now propose an equivalent linearized version of the previous model. We prove the equivalence in Theorem 1.

In the linearized model below, we introduce binary variables $x_i(v)~~\forall v \in \pazocal V$. These variables are equal to 1 iff at least $i$ of the inner edges incident to $v$ are activated. This enables us to linearize the inverse function in \cref{eq:capacity} and to replace it with a weighted sum of those binary variables.
\begin{small}
\allowdisplaybreaks
\begin{align}
    \min_{}\quad & \sum_{v \in \pazocal{V}} \sum_{i = 1}^{in(v)} x_i(v) \label{eq:objective2} \\
    & \begin{aligned}
        \text{s.t.\quad} & f(u,v) \cdot d_k  \leq c(u,v) \cdot \left ( x_1(v) - \sum_{i=2}^{in(v)} \frac{x_i(v)}{(i-1)i} \right ) \\&\quad\quad\quad\quad\quad\quad\quad\quad\quad\quad\quad\quad\quad\quad\quad~~ \forall (u,v) \in \pazocal{E} \label{eq:capacity2}    
        \end{aligned}
     \\
    &\begin{aligned}
        x_i(v) \geq \left(\sum_{u \in \pazocal N_{in}(v)}f(u, v) - (i-1) \right) / in(v) \\ ~~~~~~~~~~~~~~~~~~~~~~~~~~~~~~~~~~~~\forall v \in \pazocal V, \forall 1 \leq i \leq d(v)
    \end{aligned}\label{eq:degree}\\
    & x_i(v) \in \{0, 1\} \quad\quad\quad\quad\quad\quad~~\forall v \in \pazocal V, \forall 1 \leq i \leq d(v)\label{eq:binary2}\\
    & (\ref{eq:equilibrium}), (\ref{eq:source}), (\ref{eq:destination}),  (\ref{eq:aggregation}), (\ref{eq:tree}), (\ref{eq:binary}) \nonumber
\end{align}
\end{small}

\begin{theorem}
The BNLP (\ref{eq:objective}) - (\ref{eq:binary}) has the same optimal solution as the BLP (\ref{eq:equilibrium}) - (\ref{eq:binary2}).
\end{theorem}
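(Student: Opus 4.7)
The plan is to show that the two programs share the same optimal tree, by (i) establishing that the linearized capacity constraint is pointwise equivalent to the original nonlinear one at any optimum, and (ii) arguing that the two objective functions differ only by an additive constant on the shared feasible set, so that both are minimized by the same topology.

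First I would verify the key algebraic identity underlying the linearization: for any integer $k \geq 1$,
\[
\sum_{i=2}^{k}\frac{1}{(i-1)\,i} \;=\; \sum_{i=2}^{k}\left(\frac{1}{i-1}-\frac{1}{i}\right) \;=\; 1-\frac{1}{k},
\]
so that $1 - \sum_{i=2}^{k}\frac{1}{(i-1)i} = \frac{1}{k}$. Letting $n_v := \sum_{w\in\pazocal N_{in}(v)}f(w,v)$ denote the number of active in-edges at $v$, this telescoping identity is precisely what allows rewriting the nonlinear factor $1/n_v$ in \cref{eq:capacity} as a linear combination of auxiliary indicator variables.

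Next I would characterize the $x_i(v)$ at any optimum of the BLP. Since they are binary, lower-bounded only by \cref{eq:degree}, and appear with positive sign in \cref{eq:objective2}, minimization forces each $x_i(v)$ to its smallest feasible value. Because $\sum_u f(u,v)$ and $i-1$ are integers, the right-hand side of \cref{eq:degree} is strictly positive iff $\sum_u f(u,v) \geq i$ (and in that case lies in $(0,1]$, forcing $x_i(v)=1$); otherwise it is non-positive and $x_i(v)=0$ satisfies the constraint. Hence at optimum, $x_i(v)$ is exactly the indicator that at least $i$ in-edges at $v$ are active. Substituting into \cref{eq:capacity2} and invoking the telescoping identity above, the right-hand side of \cref{eq:capacity2} equals $c(u,v)/n_v$ when $n_v \geq 1$ and equals $0$ when $n_v=0$, matching \cref{eq:capacity} exactly (in the latter case no flow can be sent along $(u,v)$, consistent with $f(u,v)=0$).

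Finally I would reconcile the two objectives via the tree structure. Constraints \cref{eq:tree} and \cref{eq:aggregation}, together with the flow conservation equations \cref{eq:equilibrium,eq:source,eq:destination}, force the set of active edges to form a tree rooted at $t$ whose leaves are the \glspl{ue}; every active non-root node thus has exactly one outgoing edge, so that $|\text{edges}| = |\text{active nodes}| - 1$. Summing the characterization of $x_i(v)$ over $v$ then gives $\sum_v \sum_i x_i(v) = \sum_v n_v = |\text{edges}| = \sum_v a(v) - 1$, so the BLP objective coincides with the BNLP objective minus a constant on the shared feasible set, and the two problems admit the same optimal tree. The main obstacle I expect is this last reconciliation step: the two objectives literally count different quantities (active nodes vs. active in-edges), so one must carefully exploit the tree structure enforced jointly by \cref{eq:tree} and the flow equations to conclude that the gap is a constant; the linearization itself, in contrast, is essentially mechanical once the telescoping identity is in hand.
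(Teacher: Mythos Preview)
Your proposal is correct and follows essentially the same three-step approach as the paper: characterize the $x_i(v)$ at optimum via \cref{eq:degree} and the minimization pressure, use the telescoping identity to recover $1/n_v$ in \cref{eq:capacity2}, and then invoke the tree edge--node count to match the two objectives up to an additive constant. Your write-up is in fact more explicit than the paper's (you spell out the telescoping sum and the $n_v=0$ case), but the underlying argument is identical.
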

\begin{proof}
Let us first observe that $\left(\sum\limits_{u \in \pazocal N_{in}(v)}f(u, v) - (i-1) \right)$ is always positive if at least $i$ inner edges of $v$ are activated, and is nonpositive otherwise. also note that this sum is always lower or equal to $in(v)$. Hence, the right-hand side of \cref{eq:degree} is between -1 and 1, and its value is positive if $i$ inner edges are used. This, combined with the fact we are minimizing the sum of variables $x_i(v)$ means that in an optimal solution to problem $(\ref{eq:equilibrium}) - (\ref{eq:binary2})$, $x_i(v)$ will be equal to 1 if at least $i$ inner edges of $v$ are activated and $0$ otherwise.
~\\
Let us now observe that in \cref{eq:capacity2}, if $n$ inner edges of $v$ are activated, then $x_1(v), x_2(v), ... x_n(v)$ will be equal to 1. It follows that the sum $x_1(v) - \sum_{i=2}^{in(v)} \frac{x_i(v)}{(i-1)i}$ will be equal to $\frac{1}{n}$, \textit{e.g.} the constraint is equivalent to constraint (\ref{eq:capacity}).
Finally, observe that since we are building a tree, minimizing its number of edges is equivalent to minimizing its number of nodes, as a tree of $n$ nodes always has exactly $n-1$ edges, meaning the objective function, \Cref{eq:objective2}, is equivalent to the objective in \Cref{eq:objective}.

\end{proof}

\section{Performance Evaluation Setup}\label{sect:setup}
This section presents the techniques used to synthetically generate the set of measurements graphs $\Ggraph(\Vset, \Eset)$, needed to evaluate the feasibility and effectiveness of our optimization model. In particular, we will be using datasets representing an area of 0.092km$^2$ in the center of Milan, Italy. To do so, in the first subsection, we will describe the state-of-the-art techniques used to place the \gls{iab}-nodes \cite{gemmi2022costeffective}, and the \glspl{ue} \cite{3gpp.38.913}. Then, in the second subsection, we present our channel model, based on 3GPP specifications combined with ray tracing. Finally, the third section presents our data-driven time-varying UE density model, which enables us to generate different instances depending on the time of the day and the day of the week.

\subsection{Placement of gNBs and UEs}
The set of nodes of our graph \Vset is comprised of both \gls{iab}-nodes, and \glspl{ue}, whose placement is done separately using two different techniques. \gls{iab}-nodes are placed on building facades with a given density \lambdagnb. 
The exact position is computed by taking advantage of a state-of-the-art placement heuristic \cite{gemmi2022costeffective} that exploits highly precise 3D models to place the gNB such that the number of potential \glspl{ue}' location in line of sight is maximized.
\glspl{ue} are then randomly distributed both in public areas, such as streets, and inside buildings. Specifically, given a density of \lambdaue, indoor \glspl{ue} are uniformly randomly distributed inside buildings with a density equal to $r_{i/o}\cdot\lambdaue$ and outdoor \glspl{ue} are uniformly randomly distributed inside buildings with density $(1-r_{i/o}) \cdot\lambdaue$, where $r_{i/o}$ is a commonly used ratio of indoor to outdoor \gls{ue} equal to 0.8 taken from \gls{3gpp} technical report \cite{3gpp.38.913}. In short, we consider that in our simulations 80\% of the UEs are placed indoors.
\cref{fig:map} shows a deployment with $\lambdagnb=45$ and $\lambdaue=900$ \gls{ue}/km$^2$.

\begin{figure}
    \centering
    \includegraphics[width=0.6\linewidth]{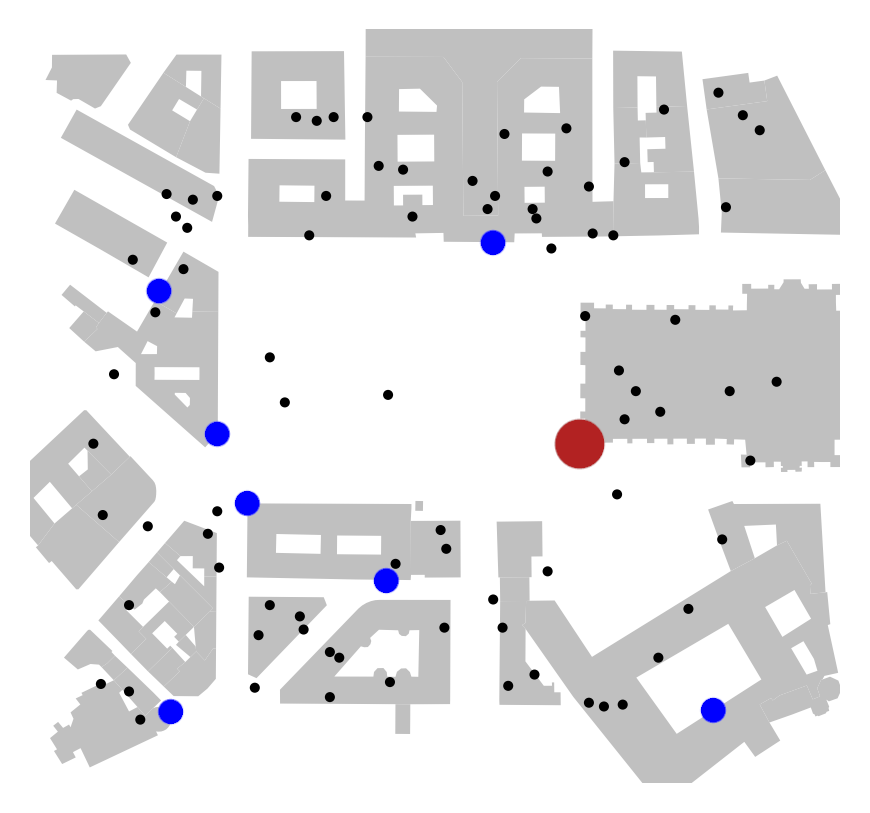}
    \caption{Sample deployment of a network in the center of Milan, with 1 \gls{iab}-donor (in red), 7 \gls{iab}-nodes (in blue), and 83 \glspl{ue} (in black). It corresponds to $\lambdaue(9)=900$ \gls{ue}/km$^2$ (Mon 9am).}
    \label{fig:map}
\end{figure}

\subsection{Access and Backhaul channel models}
Once the location of both \glspl{ue} and \gls{iab}-nodes have been determined,  we evaluate the path loss by applying the \gls{3gpp} Urban-Micro (UMi) stochastic channel model \cite{3gpp.38.901}.  However, instead of using the stochastic \gls{los} probability model provided by the same UMi model, we deterministically evaluate the \gls{los} by employing ray tracing analysis on the same 3D models used to find the optimal locations, obtaining a more accurate estimation  \cite{gemmi2022properties}. 
For indoor \glspl{ue}, we always consider them to be \gls{nlos} and we add the additional \gls{o2i} penetration loss. Since the buildings in the area we consider are mostly made out of concrete, we use the high-loss \gls{o2i} model \cite{3gpp.38.901}.

Finally, we compute the \gls{snr} using the thermal noise and by adding the receiver noise figure, then we calculate the Shannon capacity. Both access and backhaul are assumed to be using the same frequencies, but different values of antenna gain and numbers of MIMO layers are used. 
\cref{tab:simulation_params} details all the values used in our simulations, which are aligned with typical literature and 3GPP studies on this topic.

\subsection{Time-varying \gls{ue} density model}
As explained in more detail in \Cref{sect:intro}, most studies dealing with topology optimization focus their analysis on a single, or a handful, value of $\lambdaue$. 
Since the energy optimization technique we devise tunes the \gls{iab}-node activation on the basis of the number of \glspl{ue} and their load, we need to evaluate our model on a large number of values of \gls{ue} density, ideally following a realistic trend. 
Therefore, we employ a technique used in similar research \cite{baiocchi2017joint} to devise a time-varying \gls{ue} density model. 
First, we extract the cell load profile $p(t)$ related to our analysis area, in Milan,  from openly available datasets \cite{barlacchi2015multi}. We then normalize it in the range $(0,1]$, and we model the  \gls{ue} density as a function of time $\lambdaue(t) = p(t)l\lambdagnb $, where $l=10$ is the number of \glspl{ue} per \glspl{gnb} taken from the \gls{3gpp} technical report~\cite{3gpp.38.913}.
Finally, we generate a set of 168 graphs spanning an average week with a one-hour granularity. 

\cref{fig:ues} reports the hourly trend of $\lambda_{\gls{ue}}(t)$ corresponding to the area of our analysis, showing how for several hours every night the network has to serve almost no \glspl{ue} and how in the weekends, even at peak hours, the density of \glspl{ue} never exceed 80\% of the weekday peak hours.

\begin{figure}[]
    \centering
    \begin{tikzpicture}

  \definecolor{c1}{RGB}{213,94,0}
  \definecolor{c2}{RGB}{1,115,178}
  \definecolor{c3}{RGB}{2,158,115}
  \definecolor{c4}{RGB}{176,176,176}
  \definecolor{c5}{RGB}{222,143,5}
  \definecolor{c6}{RGB}{204,204,204}
  \definecolor{c7}{RGB}{204,120,188}
  \pgfplotstableset{col sep=comma}
  \begin{axis}[
  width=.9\linewidth,
  height=.55\linewidth,
  xticklabel style = {align=center, font=\scriptsize},
  yticklabel style = {align=center, font=\scriptsize},
  ylabel style ={font=\footnotesize},
  xlabel style ={font=\footnotesize},
  legend cell align={left},
  legend style={draw opacity=1, text opacity=1, draw=c6, fill opacity=0.8, at={(0.3,0.5)},anchor=west},
  tick align=outside,
  tick pos=left,
  x grid style={c4},
  xlabel={t ~[h]},
  xmajorgrids,
  xmin=0, xmax=167,
  xtick style={color=black},
  y grid style={c4},
  ylabel={$\lambda_{\text{UE}}(t)$},
  ymajorgrids,
  ymin=0, ymax=910,
  ytick style={color=black},
  y filter/.code={\pgfmathparse{#1*900}\pgfmathresult}
  ]
  \addplot+[c2, very thick, mark=None] table [x=time, y=call-in] {data/ue_calls.csv};
  \end{axis}
  
  \end{tikzpicture}
  
    \caption{Weekly profile for the \gls{ue} density in central Milan.}
    \label{fig:ues}
\end{figure}

\section{Results}\label{sect:results}

\begin{table}
    \centering
    \begin{tabular}{ll}
        \toprule
        Parameter & Value \\
        \midrule
        Area size & 0.092 km$^2$ \\
        \gls{ue} density range & [0-900] \gls{ue}/km$^2$ \\
        Indoor/Outdoor \gls{ue} ratio & 80/20 \\
        Carrier frequency & 28 GHz \\
        Bandwidth & 100 MHz \\
        Noise Figure & 5 dB \\
        O2I Loss & 14.15 dB \\ 
        Reception gain (Access/Backhaul) & 3 / 10 dBi \\
        MIMO layers (Access/Backhaul) & 2 / 4 \\
        Backhaul transmission power & 30 dBm \\
        Minimum Capacity per \gls{ue} & 80Mb/s \\
        Number of independent simulation runs & 10 \\
        \bottomrule
    \end{tabular}
    \caption{Simulation Parameters} \label{tab:simulation_params}
\end{table}
As already mentioned in \cref{sect:intro}, we evaluate our model on a 0,092km$^2$ area in the center of the city of Milan (Italy), for which we computed the \gls{ue} density trend $\lambdaue$ of an average week. For each hour of the week (168 in total), we generate the measurements graph as described in the previous section and then we run our optimization algorithm on it. We compare the trees found by our solution with 4 strategies:
\begin{itemize}
    \item \textbf{All donors}, a dense deployment without \gls{iab}, where all the \gls{gnb} are wired. It is an upper bound in terms of energy consumption and capacity. Additionally, no re-distribution of the \glspl{ue} is performed as they are always attached to the \gls{gnb} with the lowest \gls{snr}.
    \item \textbf{No relays}, a deployment where all the \gls{iab}-nodes are not active. It is a lower bound in terms of energy and capacity.
    \item \textbf{Widest Tree}, a strategy that employs the well-known widest path algorithm to find the path of maximum capacity (\textit{e.g.,} with the largest bottleneck in terms of capacity) from each \gls{ue} towards the donor and deactivates all the \gls{iab}-nodes that are not part of any path.
    \item \textbf{Optimized Tree}, our optimization model.  
\end{itemize}
In the first part of this section, we compare the energy consumption (both in terms of the number of nodes activated and of the overall number of gNB-hours) of the different algorithms. Then, in the second part, we evaluated the topologies in terms of bottlenecks of the downlink capacity.


\subsection{Energy Consumption}
To evaluate the energy consumption of the \gls{iab} networks we first show the hourly number of active \gls{iab}-nodes, then we introduce a metric that measures the total number of hours each \gls{iab}-node has been active. The \gls{iab}-donor is not taken into consideration as we always need at least one node to be active to provide a minimum service to the users. 

Figure \ref{fig:activated} shows the number of \gls{iab}-nodes activated, on the left axis, and the number of \glspl{ue} connected to the network, on the right axis. To improve the readability only the values for the first day of the week have been reported.
\textbf{Optimized Tree}, shows that it is possible to fully deactivate the \gls{iab}-nodes at nighttime (from 12 pm to 4 am) and that also during daytime several \gls{iab}-nodes can be deactivated. By comparing its trend with the number of \glspl{ue}, we can also see that it gets perfectly followed, highlighting the effectiveness of our optimization model.
\textbf{Widest Tree}, on the other hand, never manages to deactivate more than 3 \gls{iab}-nodes, highlighting that a specific algorithm is needed to fully implement energy-saving policies.

\begin{figure}
    \centering
    \begin{tikzpicture}

  \definecolor{c1}{RGB}{213,94,0}
  \definecolor{c2}{RGB}{1,115,178}
  \definecolor{c3}{RGB}{2,158,115}
  \definecolor{c4}{RGB}{176,176,176}
  \definecolor{c5}{RGB}{222,143,5}
  \definecolor{c6}{RGB}{204,204,204}
  \definecolor{c7}{RGB}{204,120,188}
  \pgfplotstableset{col sep=comma}
  \begin{axis}[
  width=.85\linewidth,
  height=0.6\linewidth,
  ylabel near ticks,
  xticklabel style = {align=center, font=\scriptsize},
  yticklabel style = {align=center, font=\scriptsize},
  ylabel style ={font=\footnotesize},
  xlabel style ={font=\footnotesize},
  legend cell align={left},
  legend columns=3,
  legend style={draw opacity=1, text opacity=1, draw=c6, fill opacity=0.8, at={(0.3,0.5)},anchor=west, font=\footnotesize, at={(0.5,1.03)}, anchor=south},
  tick align=outside,
  tick pos=left,
  x grid style={c4},
  xlabel={Time [h]},
  xmajorgrids,
  xmin=0, xmax=23,
  xtick style={color=black},
  y grid style={c4},
  ylabel={Activated \gls{iab}-nodes},
  ymajorgrids,
  ymin=-0.2, ymax=7.2,
  ytick style={color=black},
  ytick = {0,1,2,3,4,5,6,7},
  axis y line*=left,
  error bars/y dir=both, 
  error bars/y explicit,  
  ]
  \addplot+[c7, very thick, dash pattern=on 5pt off 1pt on 1pt off 1pt, mark options={solid, scale=0.7}, mark=square] table [x=time, y expr= {\thisrowno{5}-1}, y error=Widest Tree_ci] {data/activated.csv};
  \addlegendentry[]{Widest Tree}
  \addplot+[c3, very thick, mark=diamond, mark options={solid, scale=0.7}] table [x=time, y expr= {\thisrowno{3}-1}, y error=Optimized Tree_ci] {data/activated.csv};
  \addlegendentry[]{Optimized Tree}
  \addlegendimage{/pgfplots/refstyle=plot_ue}\addlegendentry{\# \glspl{ue}}
  \end{axis}

\begin{axis}[
  width=.85\linewidth,
  height=0.6\linewidth,
  ylabel near ticks,
  xticklabel style = {align=center, font=\scriptsize},
  yticklabel style = {align=center, font=\scriptsize},
  ylabel style ={font=\footnotesize},
  xlabel style ={font=\footnotesize},
  legend cell align={left},
  legend columns=2,
  legend style={draw opacity=1, text opacity=1, draw=c6, fill opacity=0.8, at={(0.3,0.5)},anchor=west, font=\footnotesize, at={(0.5,1.03)}, anchor=south},
  tick align=outside,
  tick pos=right,
  xmin=0, xmax=23,
  xtick style={color=black},
  y grid style={c4},
  ylabel={\# \glspl{ue}},
  ymin=-2.9, ymax=108,
  ytick style={color=black},
  ytick = {0,15,30,45, 60,75, 90,105},
  axis y line*=right,
  axis x line=none,
  ]
  \addplot+[c1, very thick, mark options={solid}, mark=+, dash pattern=on 1pt off 1pt] table [x=time, y expr= {\thisrowno{1}*83}] {data/ue_calls.csv};\label{plot_ue}
  \end{axis}

  \end{tikzpicture}
  
    \caption{Number of \gls{iab}-nodes activated in the first 24 of the week (Monday).}
    \label{fig:activated}
\end{figure}
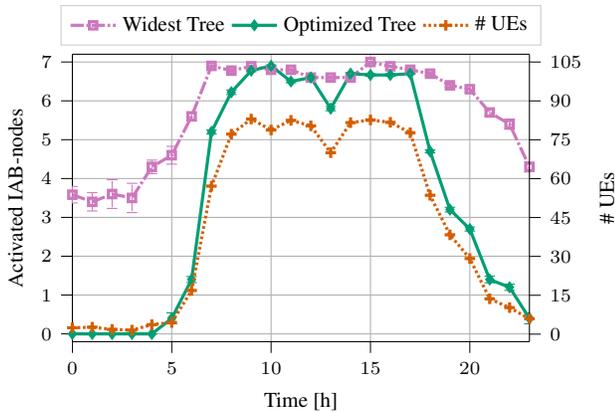

Additionally, by integrating the number of activated \gls{gnb} at each hour for the span of the week we obtain the total number of \gls{gnb}-hours for each strategy. For \textbf{No relays}, the value is $168h$, as only one \gls{gnb} is always active. For \textbf{All donors}, on the other hand, the total number of \gls{gnb}-hours is equals to $168h\cdot8=1344h$, since 8 \gls{iab}-donors are active at all times. More interestingly, the values for \textbf{Widest Tree} and \textbf{Optimized Tree} respectively activate the \gls{ran} for $1141h$ and $709h$, which means our method improves the power consumption of 47\% over \textbf{All Donors} and 38\% over \textbf{Widest Tree}.

\subsection{Capacity}
To evaluate the performance of the topology we analyze the capacity served to each \glspl{ue} with three different capacity metrics, which are shown in \cref{fig:capacity} and detailed below. First, let us define some functions used throughout the section. Let $p(u,t)$ be the function returning the set of edges forming the path from $u$ to $t$ over our topology tree. Let $N_{in}(t)$ the number of edges directed towards $t$ in the tree and $c(i,j)$ the capacity of the edge $(i,j)$.

\begin{figure*}[h]
    \subfloat[Average idle capacity. \label{fig:max_capacity}]{
\begin{tikzpicture}

  \definecolor{chocolate213940}{RGB}{213,94,0}
  \definecolor{darkcyan1115178}{RGB}{1,115,178}
  \definecolor{darkcyan2158115}{RGB}{2,158,115}
  \definecolor{darkgray176}{RGB}{176,176,176}
  \definecolor{darkorange2221435}{RGB}{222,143,5}
  \definecolor{lightgray204}{RGB}{204,204,204}
  \definecolor{orchid204120188}{RGB}{204,120,188}
  
  \begin{axis}[
  width=.33\linewidth,
  xticklabel style = {align=center, font=\scriptsize},
  yticklabel style = {align=center, font=\scriptsize},
  ylabel style ={font=\footnotesize},
  xlabel style ={font=\footnotesize},
  legend cell align={left},
  legend columns = 2,
  legend style={draw opacity=1, text opacity=1, draw=lightgray204, fill opacity=0.8, font=\scriptsize, at={(0.5,1.03)}, anchor=south},
  log basis y={10},
  tick align=outside,
  tick pos=left,
  x grid style={darkgray176},
  xlabel={Time [h]},
  xmajorgrids,
  xmin=0, xmax=23,
  xtick style={color=black},
  y grid style={darkgray176},
  ylabel={$\hat c_I$ [Mb/s]},
  ymajorgrids,
  ymin=0.9, ymax=2.5,
  ymode=linear,
  xtick={0,5,10,15,20},
  ytick style={color=black}
  ]
  \addplot+[thick, darkcyan1115178, mark=o, mark size=3, mark options={solid,scale=0.7}, dash pattern=on 1pt off 1pt] table [x=time, y expr={\thisrowno{1}/1000}, col sep =comma] {data/max.csv};
  \addlegendentry[]{All donors}
  \addplot+[thick, darkorange2221435, dash pattern=on 4pt off 1.5pt, mark=x, mark size=3, mark options={solid,scale=0.7}] table [x=time, y expr={\thisrowno{2}/1000}, col sep =comma] {data/max.csv};
  \addlegendentry[]{No relay}
  \addplot+[thick, darkcyan2158115, mark=diamond, mark options={solid,scale=0.7}] table [x=time, y expr={\thisrowno{3}/1000}, col sep =comma] {data/max.csv};
  \addlegendentry[]{Optimized Tree}
  \addplot+[thick, orchid204120188, dash pattern=on 5pt off 1pt on 1pt off 1pt, mark options={solid,scale=0.7}, mark=square] table [x=time, y expr={\thisrowno{5}/1000}, col sep =comma] {data/max.csv};
  \addlegendentry[]{Widest Tree}
  \end{axis}
  
  \end{tikzpicture}
  
    }
    \subfloat[Average saturation capacity. \label{fig:min_capacity}]{
        {
\begin{tikzpicture}

  \definecolor{chocolate213940}{RGB}{213,94,0}
  \definecolor{darkcyan1115178}{RGB}{1,115,178}
  \definecolor{darkcyan2158115}{RGB}{2,158,115}
  \definecolor{darkgray176}{RGB}{176,176,176}
  \definecolor{darkorange2221435}{RGB}{222,143,5}
  \definecolor{lightgray204}{RGB}{204,204,204}
  \definecolor{orchid204120188}{RGB}{204,120,188}
  
  \begin{axis}[
    width=.33\linewidth,
  xticklabel style = {align=center, font=\scriptsize},
  yticklabel style = {align=center, font=\scriptsize},
  ylabel style ={font=\footnotesize},
  xlabel style ={font=\footnotesize},
  legend cell align={left},
  legend columns = 2,
  legend style={draw opacity=1, text opacity=1, draw=lightgray204, fill opacity=0.8, font=\scriptsize, at={(0.5,1.03)}, anchor=south},
  log basis y={10},
  tick align=outside,
  tick pos=left,
  x grid style={darkgray176},
  xlabel={Time [h]},
  xmajorgrids,
  xmin=0, xmax=23,
  xtick style={color=black},
  y grid style={darkgray176},
  ylabel={$\hat c_S$ [Gb/s]},
  ymajorgrids,
  ymin=1.8, ymax=4000,
  ymode=log,
  xtick={0,5,10,15,20},
  ytick style={color=black}
  ]
  \addplot+[thick, darkcyan1115178, mark=o, mark size=3, dash pattern=on 1pt off 1pt,, mark options={solid,scale=0.7}] table [x=time, y=All donors, dash pattern=on 1pt off 1pt, col sep =comma] {data/min_mean.csv};
  \addlegendentry[]{All donors} 
  \addplot+[thick, darkorange2221435, dash pattern=on 4pt off 1.5pt, mark=x, mark size=3, mark options={solid,scale=0.7}] table [x=time, y=No relay, col sep =comma] {data/min_mean.csv};
  \addlegendentry[]{No relays}
  \addplot+[thick, darkcyan2158115, mark=diamond, mark options={solid,scale=0.7}] table [x=time, y=Optimized Tree, col sep =comma] {data/min_mean.csv};
  \addlegendentry[]{Optimized Tree}
  \addplot+[thick, orchid204120188, dash pattern=on 5pt off 1pt on 1pt off 1pt, mark options={solid,scale=0.7}, mark=square] table [x=time, y=Widest Tree, col sep =comma] {data/min_mean.csv};
  \addlegendentry[]{Widest Tree}
  \end{axis}
  
  \end{tikzpicture}
  }
    }
    \subfloat[Minimum saturation capacity.\label{fig:minmin_capacity}]{
        {
\begin{tikzpicture}

  \definecolor{chocolate213940}{RGB}{213,94,0}
  \definecolor{darkcyan1115178}{RGB}{1,115,178}
  \definecolor{darkcyan2158115}{RGB}{2,158,115}
  \definecolor{darkgray176}{RGB}{176,176,176}
  \definecolor{darkorange2221435}{RGB}{222,143,5}
  \definecolor{lightgray204}{RGB}{204,204,204}
  \definecolor{orchid204120188}{RGB}{204,120,188}
  
  \begin{axis}[
    width=.33\linewidth,
  xticklabel style = {align=center, font=\scriptsize},
  yticklabel style = {align=center, font=\scriptsize},
  ylabel style ={font=\footnotesize},
  xlabel style ={font=\footnotesize},
  legend cell align={left},
  legend columns = 2,
  legend style={draw opacity=1, text opacity=1, draw=lightgray204, fill opacity=0.8, font=\scriptsize, at={(0.5,1.03)}, anchor=south},
  log basis y={10},
  tick align=outside,
  tick pos=left,
  x grid style={darkgray176},
  xlabel={Time [h]},
  xmajorgrids,
  xmin=0, xmax=23,
  xtick style={color=black},
  y grid style={darkgray176},
  ylabel={$\bar c_S $ [Mb/s]},
  ymajorgrids,
  ymin=1.8, ymax=4000,
  xtick={0,5,10,15,20},
  ymode=log,
  ytick style={color=black}
  ]
  \addplot+[thick, darkcyan1115178, mark=o, mark size=3, mark options={solid,scale=0.7}, dash pattern=on 1pt off 1pt] table [x=time, y=All donors, col sep =comma] {data/min.csv};
  \addlegendentry[]{All donors}
  \addplot+[thick, darkorange2221435, dash pattern=on 4pt off 1.5pt, mark=x, mark size=3, mark options={solid,scale=0.7}] table [x=time, y=No relay, col sep =comma] {data/min.csv};
  \addlegendentry[]{No relays}
  \addplot+[thick, darkcyan2158115, mark=diamond, mark options={solid,scale=0.7}] table [x=time, y=Optimized Tree, col sep =comma] {data/min.csv};
  \addlegendentry[]{Optimized Tree}
  \addplot+[thick, orchid204120188, dash pattern=on 5pt off 1pt on 1pt off 1pt, mark options={solid,scale=0.7}, mark=square] table [x=time, y=Widest Tree, col sep =comma] {data/min.csv};
  \addlegendentry[]{Widest Tree}
  \end{axis}
  
  \end{tikzpicture}
  }
    }
    \vspace{1em}
    \caption{Capacity Metrics for the first 24h of the week (Mon).}\label{fig:capacity}    
\end{figure*}
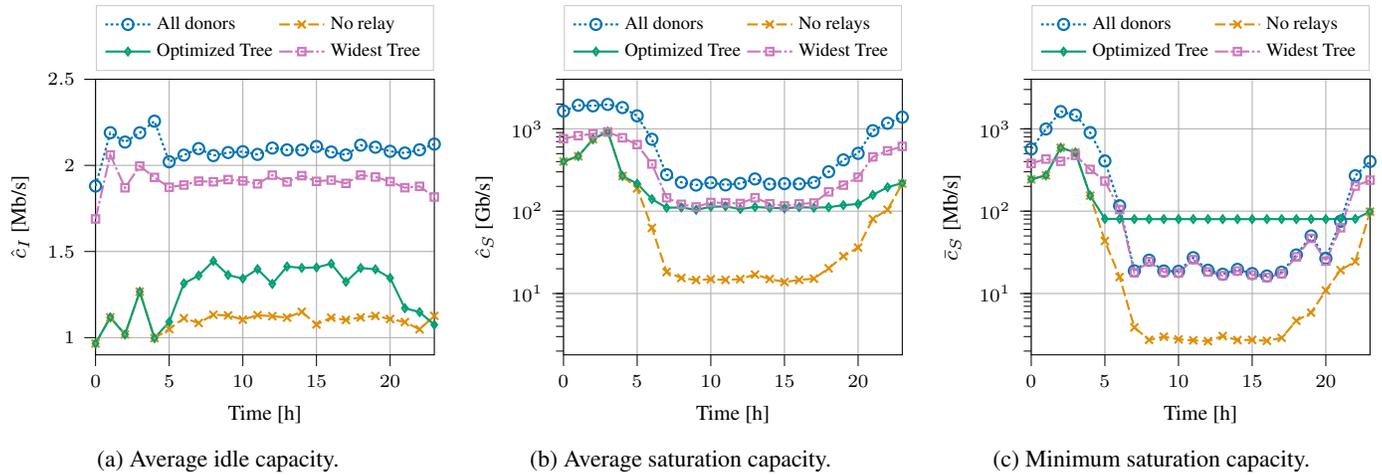

The first metric, called Average Idle Capacity measures the average theoretical capacity of \glspl{ue}, \textit{e.g.} the capacity that would be attainable if the network resources were completely unused. As detailed in \cref{eq:average_idle} below, it is computed as the minimum capacity (the bottleneck) of the edges over the path between each \gls{ue} ($u$) and the donor ($t$). Which is then averaged across all the \glspl{ue} $u \in \pazocal U$. This metric represents an upper bound on the capacity per \gls{ue}.
\begin{equation}\label{eq:average_idle}
\hat c_I = \frac{1}{|\pazocal U|} \sum_{u\in \pazocal U} \min_{(i,j) \in
p(u, t)} c(i,j)
\end{equation}

Figure \ref{fig:max_capacity} shows the Average Idle Capacity for the four different strategies on the first 24 hours of our week. The first insight provided by this figure is that the maximum capacity per \gls{ue} does not depend on the load of the network.
Moreover, as we were expecting \textbf{All donors} and \textbf{No relay} are respectively the upper and lower bounds in terms of capacity. \textbf{Widest Tree}, the strategy that maximizes the bottleneck between each \gls{ue} and the donor, manages to 
achieve a capacity very close to the upper bound (8\% lower). \textbf{Optimized Tree}  instead shows a more significant drop with a loss of 35\%. 
The drop can be explained by the minimum capacity constraint that, instead of letting each \glspl{ue} reach the \gls{iab}-donor through the widest path, in certain cases picks paths worse in terms of maximum capacity that instead guarantee the minimum capacity.

The second metric, called Average Saturation Capacity and detailed in \Cref{eq:average_saturation}, is formulated in a very similar way as \Cref{eq:average_idle}. However, here we assume that all the \glspl{ue} try to access the network at the same time, thus we divide the capacity of each edge $c_{s,t}$ by the number of inner neighbors of the node $t$, since those edges share the same resources through the scheduler.
\begin{equation}\label{eq:average_saturation}
\hat c_S = \frac{1}{|\pazocal U|} \sum_{u\in \pazocal U} \min_{(i,j) \in p(u, t)} \frac{c(i,j)}{N_{in}(j)}
\end{equation}

As in the previous metric, also here \textbf{All donors} and \textbf{No relays} behaves respectively as upper and lower bound. The difference between the two other strategies, and their distance from the upper bound drops sharply. In fact, at peak time \textbf{All donors} is capable of delivering roughly 200Mb/s per \gls{ue}, while \textbf{Optimized Tree} and \textbf{Widest Tree} respectively deliver 115 and 130 Mb/s per \gls{ue}.

The third metric, called Minimum Saturation Capacity and detailed in \Cref{eq:minimum_saturation}, measures the capacity delivered to worst \gls{ue} while the network is under saturation by all the \glspl{ue}, \textit{e.g.} it defines the minimum level of Quality-of-Service provided by the topology. It is defined similarly to the previous one, but instead of averaging over the \glspl{ue} we take the worst value. 

\begin{equation}\label{eq:minimum_saturation}
    \bar c_S = \min_{u\in \pazocal U} \min_{(i,j) \in p(u, t)} \frac{c(i,j)}{N_{in}(j)}
\end{equation}

\cref{fig:min_capacity} shows that \textbf{Optimized Tree} is the only strategy that manages to guarantee the minimum level of service, equal to 80Mb/s during peak hours (7 am-16 pm), while also minimizing the excessive capacity at night time.
In comparison, with \textbf{No relays} we measure a minimum level of service that, at peak time, is one order of magnitude lower than the minimum level of service (between 2 and 7 Mb/s) while \textbf{Widest Tree} and \textbf{All donors} behave similarly in terms of minimum capacity, as they can both take advantage of all the \gls{iab}-nodes available. 
However, since the \glspl{ue} are not load-balanced across all the available \gls{gnb}, the minimum level of service is not met.
We also note that with \textbf{All donors} there is also an excess of capacity at night time; when energy-saving policies could deactivate several \gls{iab}-nodes, moreover despite being in a more favorable position where no routing is to be performed beyond the first link between the UEs and the BS, \textbf{All donors} still has less capacity than the Optimized Tree. This emphasizes the importance of balancing the load of UEs between base-stations

\section{Conclusions}
In this paper, we have modeled and solved the problem of finding an IAB topology optimized for energy efficiency. We find optimal solutions to the problem, which enables us to save up to 47\% of energy compared to the baseline, while still respecting capacity constraints, which other approaches cannot do. Our results hence show the importance of considering energy-efficiency as a core feature of IAB topology design. In the future, we plan to improve the speed of our algorithm and to propose fast heuristics inspired by it. 
We also plan on evaluating the energy savings and capacity on a real testbed.

\bibliographystyle{IEEEtran}
\bibliography{globecom23}

\begin{thebibliography}{10}
\providecommand{\url}[1]{#1}
\csname url@samestyle\endcsname
\providecommand{\newblock}{\relax}
\providecommand{\bibinfo}[2]{#2}
\providecommand{\BIBentrySTDinterwordspacing}{\spaceskip=0pt\relax}
\providecommand{\BIBentryALTinterwordstretchfactor}{4}
\providecommand{\BIBentryALTinterwordspacing}{\spaceskip=\fontdimen2\font plus
\BIBentryALTinterwordstretchfactor\fontdimen3\font minus
  \fontdimen4\font\relax}
\providecommand{\BIBforeignlanguage}[2]{{%
\expandafter\ifx\csname l@#1\endcsname\relax
\typeout{** WARNING: IEEEtran.bst: No hyphenation pattern has been}%
\typeout{** loaded for the language `#1'. Using the pattern for}%
\typeout{** the default language instead.}%
\else
\language=\csname l@#1\endcsname
\fi
#2}}
\providecommand{\BIBdecl}{\relax}
\BIBdecl

\bibitem{perez2015towards}
D.~López-Pérez \emph{et~al.}, ``{Towards 1 Gbps/UE in Cellular Systems:
  Understanding Ultra-Dense Small Cell Deployments},'' \emph{IEEE
  Communications Surveys \& Tutorials}, no.~4, pp. 2078--2101, 2015.

\bibitem{rappaport2013millimeter}
T.~S. Rappaport \emph{et~al.}, ``{Millimeter wave mobile communications for
  {5G} cellular: It will work!}'' \emph{IEEE Access}, pp. 335--349, 2013.

\bibitem{akdeniz2014millimeter}
M.~R. Akdeniz \emph{et~al.}, ``Millimeter wave channel modeling and cellular
  capacity evaluation,'' \emph{IEEE Journal on Selected Areas in
  Communications}, no.~6, pp. 1164--1179, 2014.

\bibitem{polese2020integrated}
M.~Polese \emph{et~al.}, ``{Integrated access and backhaul in {5G} mmWave
  networks: Potential and challenges},'' \emph{IEEE Communications Magazine},
  no.~3, pp. 62--68, 2020.

\bibitem{gsma2020energy}
{GSMA}, ``{5G Energy efficiencies: Green is the new black},'' Nov 2020.

\bibitem{polese2023understanding}
M.~Polese \emph{et~al.}, ``{Understanding O-RAN: Architecture, interfaces,
  algorithms, security, and research challenges},'' \emph{IEEE Communications
  Surveys \& Tutorials}, 2023.

\bibitem{doro2022orchestran}
S.~D’Oro \emph{et~al.}, ``Orchestran: Network automation through orchestrated
  intelligence in the open ran,'' in \emph{IEEE INFOCOM 2022-IEEE Conference on
  Computer Communications}.\hskip 1em plus 0.5em minus 0.4em\relax IEEE, 2022,
  pp. 270--279.

\bibitem{moro2023toward}
E.~Moro \emph{et~al.}, ``{Toward Open Integrated Access and Backhaul\\with
  O-RAN},'' in \emph{{21st Mediterranean Communication and Computer Networking
  Conference (MedComNet 2023)}}.\hskip 1em plus 0.5em minus 0.4em\relax IEEE,
  2023.

\bibitem{IslamIAB2017}
M.~N. Islam \emph{et~al.}, ``Integrated access backhaul in millimeter wave
  networks,'' in \emph{2017 IEEE Wireless Communications and Networking
  Conference (WCNC)}, 2017, pp. 1--6.

\bibitem{Islam2018}
------, ``Investigation of performance in integrated access and backhaul
  networks,'' in \emph{IEEE INFOCOM 2018 - IEEE Conference on Computer
  Communications Workshops (INFOCOM WKSHPS)}, 2018, pp. 597--602.

\bibitem{meng2018energy}
D.~Meng \emph{et~al.}, ``An energy-saving scheme with multi-hop transmission
  for mmwave backhaul networks,'' in \emph{2018 IEEE 87th Vehicular Technology
  Conference (VTC Spring)}.\hskip 1em plus 0.5em minus 0.4em\relax IEEE, 2018,
  pp. 1--5.

\bibitem{lei2018noma}
L.~Lei \emph{et~al.}, ``Noma aided interference management for full-duplex
  self-backhauling hetnets,'' \emph{IEEE Communications Letters}, no.~8, pp.
  1696--1699, 2018.

\bibitem{rault2014energy}
T.~Rault \emph{et~al.}, ``Energy efficiency in wireless sensor networks: A
  top-down survey,'' \emph{Computer networks}, vol.~67, pp. 104--122, 2014.

\bibitem{barlacchi2015multi}
G.~Barlacchi \emph{et~al.}, ``{A multi-source dataset of urban life in the city
  of Milan and the Province of Trentino},'' \emph{Scientific data}, no.~1, pp.
  1--15, 2015.

\bibitem{piovesan2022machine}
N.~Piovesan \emph{et~al.}, ``Machine learning and analytical power consumption
  models for {5G Base Stations},'' \emph{IEEE Communications Magazine}, no.~10,
  pp. 56--62, 2022.

\bibitem{3gpp.38.874}
3GPP, ``{Study on integrated access and backhaul},'' {3rd Generation
  Partnership Project (3GPP)}, Technical Report (TR) 38.874, 01 2019, version
  16.0.0.

\bibitem{gemmi2022costeffective}
G.~Gemmi \emph{et~al.}, ``On cost-effective, reliable coverage for los
  communications in urban areas,'' \emph{IEEE Transactions on Network and
  Service Management}, 2022.

\bibitem{3gpp.38.913}
3GPP, ``{Study on Scenarios and Requirements for Next Generation Access
  Technologies},'' {3rd Generation Partnership Project (3GPP)}, Technical
  Report (TR) 38.913, 04 2022, version 17.0.0.

\bibitem{3gpp.38.901}
------, ``{Study on channel model for frequencies from 0.5 to 100 GHz},'' {3rd
  Generation Partnership Project (3GPP)}, Technical Report (TR) 38.901, 01
  2020, version 16.1.0.

\bibitem{gemmi2022properties}
G.~Gemmi \emph{et~al.}, ``On the properties of next generation wireless
  backhaul,'' \emph{IEEE Transactions on Network Science and Engineering},
  2022.

\bibitem{baiocchi2017joint}
A.~Baiocchi \emph{et~al.}, ``Joint management of energy consumption,
  maintenance costs, and user revenues in cellular networks with sleep modes,''
  \emph{IEEE Transactions on Green Communications and Networking}, no.~2, pp.
  167--181, 2017.

\end{thebibliography}

\end{document}